\newcommand{\be}{\begin{equation}} 
\newcommand{\ee}{\end{equation}}
\newcommand{\beq}{\begin{eqnarray}}
\newcommand{\eeq}{\end{eqnarray}}
\newcommand{\I}{\mathbbm{1}}
\newcommand{\B}{\mathcal{B}}
\newcommand{\A}{\mathcal{A}}
\newcommand{\ket}[1]{|#1\rangle}
\newcommand{\bra}[1]{\langle#1|}
\newcommand{\la}{\langle}
\newcommand{\ra}{\rangle}
\newcommand{\s}{\ket{\tilde{\psi}}}
\newcommand{\blk}{\color{black}}
\newtheorem*{thm}{Theorem}
\newtheorem{lemma}{Lemma}
\newtheorem{result}{Result}
\newtheorem*{assumption}{Assumption}
\begin{document}

\title{Sum-of-squares decompositions for a family of noncontextuality inequalities and self-testing of quantum devices}

\author{Debashis Saha}
\affiliation{Center for Theoretical Physics, Polish Academy of Sciences, Aleja Lotnik\'{o}w 32/46, 02-668 Warsaw, Poland}
\orcid{0000-0003-4525-0903}
\author{Rafael Santos}
\affiliation{Center for Theoretical Physics, Polish Academy of Sciences, Aleja Lotnik\'{o}w 32/46, 02-668 Warsaw, Poland}
\orcid{0000-0002-4695-530X}
\author{Remigiusz Augusiak}
\affiliation{Center for Theoretical Physics, Polish Academy of Sciences, Aleja Lotnik\'{o}w 32/46, 02-668 Warsaw, Poland}
\orcid{0000-0003-1154-6132}

\maketitle

\begin{abstract}
  Violation of a noncontextuality inequality or the phenomenon referred to `quantum contextuality' is a fundamental feature of quantum theory. In this article, we derive a novel family of noncontextuality inequalities along with their sum-of-squares decompositions in the simplest (odd-cycle) sequential-measurement scenario capable to demonstrate Kochen-Specker contextuality. The sum-of-squares decompositions allow us to obtain the maximal quantum violation of these inequalities and a set of algebraic relations necessarily satisfied by any state and measurements achieving it. With their help, we prove that our inequalities can be used for self-testing of three-dimensional quantum state and measurements. Remarkably, the presented self-testing results rely on a single assumption about the measurement device that is much weaker than the assumptions considered in Kochen-Specker contextuality.
\end{abstract}

To realize genuine quantum technologies such as cryptographic systems, quantum simulators or quantum computing devices, the back-end user should be ensured that the quantum devices work as specified by the provider. Methods to certify that a quantum device operates in a nonclassical way are therefore needed. The most compelling one, developed in the cryptographic context, is self-testing \cite{MayersYao}. It exploits nonlocality, i.e., the existence of quantum correlations that cannot be reproduced by the local-realist models, and provides the complete form of device-independent \footnote{With the requirement of the spatial separation between measurements on subsystems, and without any assumption on the internal features of the devices.} characterization of quantum devices only from the statistical data the devices generate.
Thus, it is being extensively studied in recent years \cite{Yang,Bamps,Coladangelo}.

However, since self-testing, as defined in Ref. \cite{MayersYao}, stands on nonlocality \cite{bell} (or, in other words, quantum correlations that violate local-realist inequalities), it is restricted to preparations of composite quantum systems and local measurements on them. Therefore, it poses a fundamental question: presuming the minimum features of the devices how to characterize $(i)$ quantum systems of prime dimension that are not capable of exhibiting nonlocal correlations, and $(ii)$ quantum systems without entanglement or spatial separation between subsystems? A possible way to address such instances is to employ quantum contextuality (Kochen-Specker contextuality), a generalization of nonlocal correlations obtained from the statistics of commuting measurements that are performed on a single quantum system \cite{ks,cabello08,csw,kcbs}. 
%
%
Indeed, the recent study \cite{bharti,knill,bharti2019} provides self-testing statements based on contextual correlations (or correlations that violate noncontextuality inequality).  Since quantum contextual correlations are essential in many aspects of quantum computation \cite{magic,mbc} and communication \cite{horodecki,saha}, self-testing statements are crucial for certifying quantum technology \cite{bharti2019}. Apart from that, it is, nonetheless, fundamentally interesting to seek the maximum information one can infer about the quantum devices only from the observed statistics in a contextuality experiment. 

In the context of nonlocality, sum-of-squares (SOS) decomposition of quantum operators associated with local-realist inequalities has been the key mathematical tool in recent years to obtain optimal quantum values and self-testing properties of quantum devices \cite{Bamps,chainbell,satwap,Kaniewski,sarkar2019selftesting,Augusiak_2019,kaniewski2019weak,cui2019generalization}. Whether this line of study, albeit, restricted to nonlocal correlations, can further be extended to contextuality scenario is of great interest from the perspective of unified approach to non-classical correlations \cite{csw,Amaral2018}.

In this work, we consider Klyachko-Can-Binicio\ifmmode \breve{g}\else \u{g}\fi{}lu-Shumovsky (KCBS) scenario which comprises of one preparation and $n$ (where $n\geqslant 5$ is odd) number of measurements \cite{kcbs,ncycle,lsw}. This is the simplest scenario capable to exhibit contextual correlations using a three-dimensional quantum system and five binary outcome measurements. It also has several implications in quantum foundation and quantum information \cite{guhne,horodecki,qkd,cabello2013,kurzynski,saha-ramanathan,xu-saha}.
We first introduce a modified version of KCBS expression for $n=5$ involving correlation between the outcomes of two sequential measurements, along with an SOS decomposition of the respective quantum operator. We describe our methodology to obtain SOS and simultaneously, generalize for $n$-cycle KCBS scenario where $n=2^m+1, m\in \mathbbm{N}$. Interestingly, the SOS decomposition holds even without the idealizations that the measurements satisfy commutativity conditions in a cyclic order. By virtue of this decomposition, we obtain the maximum quantum value of our modified $n$-cycle expression and a set of algebraic relations involving any quantum state and measurements that yield those maximum values. By solving those relations, we show the existence of a three-dimensional vector-space invariant under the algebra of measurement operators. Subsequently, we prove the uniqueness of the projected three-dimensional measurements and state up to unitary equivalence, that is, self-testing property of the quantum devices.  The presented self-testing statement relies on the premise that the measurement device
returns only the post-measurement system and has no memory, while it does not rely
on the commutativity relations between observables.

\section{Preliminaries} \label{sec:2}
We begin by illustrating our scenario and specifying the assumptions.

\textit{Sequential-measurement set-up.} Each run of the experimental observation comprises of preparation of a physical system followed by two measurements in a sequence using one non-demolishing measurement device as depicted in Fig. \ref{fig}. The measurement device has $n$ (odd) different settings, each of which yields $\pm 1$ outcome. Let's denote the first and second measurement settings by $\A_i$ and $\A_j$ where $i,j\in \{1,\dots,n\}$. The settings are chosen such that $j=i\pm 1$, where from now on the subscript $i$ is taken modulo $n$, that is, $\A_{i\pm n}=\A_i$. We make the following assumption about the measurement device.
\begin{assumption}\label{as1}
The measurement device has no memory and returns only the actual post-measurement state.
\end{assumption}
This assumption is necessary, otherwise, any quantum statistics can be reproduced by classical systems. \\

By repeating this experiment many times we can obtain joint probabilities $p(a_i,a_{i\pm1}|\A_i,\A_{i\pm1})$ of two measurements and single probabilities $p(a_i|\A_i)$ of the first measurement, and consequently, their correlation functions,
\beq \label{prob}
 \la \A_i \A_{i\pm1}\ra &=& \sum_{a_i,a_{i\pm1}} a_ia_{i\pm1} p(a_i,a_{i\pm1}|\A_i,\A_{i\pm1}), \nonumber  \\
 \la \A_i\ra &=& \sum_{a_i} a_i p(a_i|\A_i),
\eeq  
where the measurement outcomes are denoted as $a_i=\pm 1$.

\begin{figure}[http]
\centering
\includegraphics[scale=0.29]{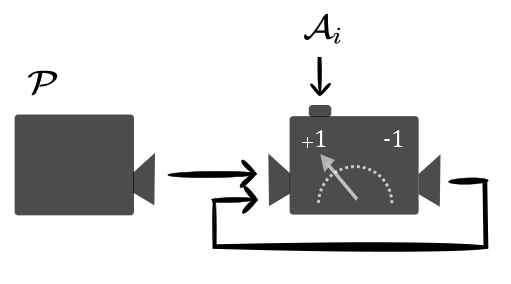}
\caption{{\bf Sequential-measurement set-up.} The simplest contextuality scenario comprises of one preparation $\mathcal{P}$ and one measurement device with settings $\A_i$ each of them returns $\pm 1$ outcome. }
\label{fig}
\end{figure}
In quantum theory the two-outcome measurements $\mathcal{A}_i$ can be in general non-projective. However, since we do not restrict the dimension of these measurements, an extension of 
Naimark's dilation theorem \cite{knill} allows us to consider these measurements to be projective.
Thus, we can represent the measurements by the following operators
\be \label{Ai}
A_i=2P_i-\I ,
\ee 
where $P_i$ are projectors acting on some finite-dimensional Hilbert space $\mathcal{H}$.  
The preparation is represented by a quantum state that, by the same reason, can be considered pure; we denote it by $|\psi\ra$. 

Kochen-Specker contextuality \cite{csw} pertains to the assumption that the projectors satisfy certain orthogonality relations, particularly in this scenario, $P_iP_{i\pm 1}=0$ for all $i$, implying $[A_i,A_{i\pm1}]=0$. Such prerequisite about the measurement device are difficult to justify in practice. Since we aim to characterize the quantum devices from their minimal features, we do not make this assumption. We will see later that orthogonality relations between projectors will be derived facts from the maximal violation of our inequality. 

A general linear expression that can be considered to test nonclassicality (or noncontextuality in the usual scenario) in this set-up is given by,
\be \label{genB}
\B = \sum_i c_i ( \la \A_i\A_{i+1} \ra + \la \A_{i+1}\A_i \ra ) + \sum_i d_i \la \A_i\ra .
\ee 
Using the quantum expression of the joint probabilities under the aforementioned Assumption, for example, $p(+1,+1|\A_i,\A_{i\pm 1}) = \la \psi | P_iP_{i\pm1} P_i|\psi\ra $, we find
\be 
\la \A_i\A_{i+1} \ra + \la \A_{i+1}\A_i \ra = \la \psi | \{ A_i, A_{i+1} \} | \psi \ra .
\ee 
Subsequently, the optimal quantum value of the expression \eqref{genB} is defined as
\be \label{genqB}
\eta^Q = \sup_{|\psi\ra, A_i}  \la \psi|B|\psi\ra, 
\ee
where $B=\sum_i c_i \{ A_i, A_{i+1} \} + \sum_i d_i A_i$ is the quantum operator associated with the expression $\B$ and $A_i$ are of the form \eqref{Ai}. Notice that in the usual scenario, due to commutativity relations, $\{A_i,A_{i+1}\}$ can be replaced by $2A_iA_{i+1}$. The maximal classical value $\eta^C$ (or noncontextual value in the usual scenario \footnote{ Since any noncontextual value assignment pertains to certain orthogonality conditions, here we refer to $\eta^C$ as the classical value for the relaxed scenario. Note that, under the aforesaid Assumption, the optimal value of $\B$ in classical theory or any other theory where measurement does not affect the system is given by Eq. \eqref{gennc}. With the orthogonality conditions, $\eta^C$ reduces to the maximal noncontextual value.}) is defined as
\be \label{gennc}
\eta^C = \max_{a_i\in \{1,-1\}} \left\{2 \sum_{i} c_i a_ia_{i+1} + \sum_{i} d_i a_i \right\}.
\ee

\textit{KCBS inequality.}
The well known $n$-cycle KCBS noncontextuality inequality \cite{ncycle} is of the form 
\be \label{kcbsn}
\B_{\mathrm{KCBS}}:=  - \sum^n_{i=1} \la \A_i\A_{i+1}\ra\leqslant \eta^C = n-2 .
\ee
The maximal quantum violation of this inequality is 
\begin{equation}
    \eta^Q = \frac{3\cos{(\pi/n)}-1}{1+\cos{(\pi/n)}}n
\end{equation}
and it is achieved by the following quantum state
\begin{equation}
|\widehat{\psi} \rangle =\ket{0}\equiv (1,0,0)^{T}, \label{ops} 
\end{equation}
and observables
\begin{equation}
    \widehat{A}_i = 2 |\widehat{v}_{i} \rangle\! \langle \widehat{v}_{i}| - \mathbbm{1}, \label{opm}
\end{equation}
where $\ket{\widehat{v}_{i}}$ are three-dimensional real 
vectors defined as
\be \label{pn}
|\widehat{v}_{i} \rangle = (\cos{\theta},\sin{\theta}\sin{\phi_{i}},\sin{\theta}\cos{\phi_{i}})^{T}
\ee
where $\theta$ is defined as
$\cos\theta=\sqrt{1/(1+2\alpha)}$, where
\be \label{alpha}
\alpha = \frac{1}{2}\sec\left(\frac{\pi}{n}\right)
\ee 
and
\be \label{phi}
\phi_i = \frac{n-1}{n}\pi i .
\ee 
Note that $\alpha$ and $\phi_i$ are functions of $n$, which for the sake of simplification is not explicitly specified in their notation. 
Let us also remark that $|\widehat{\psi}\rangle \in \mathbbm{C}^3$ and $\widehat{A}_i$ acting on $\mathbbm{C}^3$ denote a particular example of quantum realizations achieving the maximal quantum value of the KCBS inequality \eqref{kcbsn}. 
The self-testing properties of the above-mentioned state and measurements based on the violation of KCBS inequality are shown in \cite{bharti}. The proof is based on the optimization method of semidefinite programming under the usual assumptions of contextuality, along with an additional assumption that $P_i$ in Eq. \eqref{Ai} are rank-one projectors.  \\

\textit{Sum-of-squares decomposition.} Let us finally discuss the concept of
sum-of-squares decompositions. Consider a quantum operator $B$ corresponding to some noncontextuality expression $\mathcal{B}$ like the one in \eqref{genqB}.
Now, if for any choice of quantum measurements $A_i$ and some $\eta\in\mathbbm{R}$ one can decompose the shifted operator $\eta\mathbbm{1}-B$ as
\be \label{sosg}
\eta \I - B = \sum_k E^\dagger_k E_k, 
\ee 
the maximal quantum value of $\mathcal{B}$ is upper
bounded by $\eta$, i.e., $\langle \psi|B|\psi\rangle \leqslant \eta$ for any quantum state $|\psi\ra$. We call (\ref{sosg}) a sum-of-squares decomposition associated to $B$. Typically $E_k$ are constructed from the measurement operators $A_i$. The bound $\eta$ is realized by a state and a set of measurements if and only if the following algebraic relation holds true for all $k$,
\be \label{ar}
\quad E_k |\psi\ra = 0.
\ee 
Our self-testing proofs heavily rely on the above relations.

Let us remark that Ref. \cite{lsw} provides an SOS decomposition for the conventional KCBS operator under the assumptions that the measurements satisfy $[A_i,A_{i\pm1}]=0$. In what follows we 
derive an alternative noncontextuality inequality together with the 
corresponding SOS decomposition of the form  \eqref{sosg} 
which does not require making this assumption.
Furthermore, our SOS is designed in such a way that the algebraic relations \eqref{ar} it implies can be used for self-testing.


\section{Modified KCBS inequality with sum-of-squares decomposition}

We are now ready to present our results. For pedagogical 
purposes we begin with the simplest case of $n=5$ and consider
the following modified KCBS expression
\be \label{B5}
\B =  - \frac12 \sum^5_{i=1} (\la \A_i\A_{i+1}\ra + \la \A_{i+1}\A_i \ra ) - \alpha^2 \sum^5_{i=1} \la \A_i\ra, 
\ee 
where $\alpha$ is given in \eqref{alpha} with $n=5$. Following \eqref{gennc} it is not difficult to find the maximal classical value of $B$ is $\eta^C=3+\alpha^2$.

\begin{result} [Modified KCBS inequality with SOS]
The maximal quantum value of $\B$ given in Eq. \eqref{B5} with $\alpha=(1/2)\sec(\pi/n)$ is $\eta^Q=3(1+\alpha^2)$.
\end{result}
\begin{proof}
To prove this statement we present the SOS decomposition for the modified KCBS operator 
\be \label{kcbsop}
B = - \frac{1}{2} \sum_i \{A_i,A_{i+1}\} - \alpha^2\sum_i A_i.\ee
Let us first define the following Hermitian operators for $i=1,\dots,5$, 
\beq
M_{i,1} &=& -\frac{1}{\alpha^3} (A_i+\alpha A_{i-1}+\alpha A_{i+1}), \nonumber  \\ 
M_{i,2} &=& -\frac{1}{\alpha^4} (-\alpha A_i + A_{i-2} + A_{i+2}),
\eeq 
and observe that they satisfy the following relations 
\begin{equation} \label{B5decom}
- \frac{\alpha^5}{5}\sum_i \left(2M_{i,1}+ \alpha^3 M_{i,2}\right)= \alpha^2\sum_i  A_i,   
\end{equation}
and
\begin{equation}
    \frac{\alpha^5}{5} \sum_i \left( M^2_{i,1} + \frac{\alpha^3}{2} M^2_{i,2} \right) = \frac12 \sum_i \{ A_i,A_{i+1} \} +\frac{5}{2\alpha} \I ,
\end{equation}
where we have used the identities $\alpha^2+\alpha=1$ for $\alpha$ given in Eq. \eqref{alpha} with $n=5$ and $A_i^2=\I$.
With the aid of these relations it is straightforward to verify that 
\beq \label{SOS0}
&& \frac{\alpha^5}{5} \sum_i \!\left(\I - M_{i,1} \right)^2 \!+\! \frac{\alpha^8}{10} \sum_i\! \left(\I - M_{i,2} \right)^2  \nonumber \\
&&\hspace{0.5cm}= \left( \alpha^5+\frac{\alpha^8}{2} \right)\I - \frac{\alpha^5}{5}\sum_i \left(2M_{i,1}+ \alpha^3 M_{i,2}\right)  \nonumber \\
&& \hspace{1cm}+ \frac{\alpha^5}{5} \sum_i \left( M^2_{i,1} + \frac{\alpha^3}{2} M^2_{i,2} \right)  \nonumber \\
&&\hspace{0.5cm}= 3(1+\alpha^2) \I - B, 
\eeq  
where $B$ is given in Eq. \eqref{kcbsop}.

Thus, the above equation constitutes a SOS decomposition \eqref{sosg} of the modified KCBS operator in which
\begin{equation}
E_{k} = \sqrt{\frac{\alpha^5}{5}} (\I - M_{k,1})    
\end{equation}
for $k=1,\ldots,5$;
\begin{equation}
E_{k} = \sqrt{\frac{\alpha^8}{10}} (\I - M_{k-5,2})
\end{equation}
for $k=6,\ldots,10$;
and $3+3\alpha^2=4.146$ is the quantum bound of $B$. We can validate that the state and measurements in dimension three \eqref{ops}-\eqref{opm} responsible for optimal value of KCBS inequality achieve this bound.
\end{proof}
Inspired by the above $n=5$ case, let us now derive our modified KCBS expression for more measurements. 
Our aim is to obtain a general expression for which the sum-of-squares decomposition 
can easily be constructed as the one in Eq. \eqref{SOS0}
and later directly used for self-testing.

To reach this goal, let us consider $n$ two-outcome quantum measurements
represented by operators $A_i$ (\ref{Ai}) acting on some 
Hilbert space of unknown but finite dimension. 
Let us then consider the expression \eqref{sosg} in which the operators $E_k$ are of the form $\I-M_k$ with some positive 
multiplicative factors, where $M_k$ are constructed from $A_i$.  Notice that for such a choice,  Eq. \eqref{ar} implies that $M_k$ must be stabilizing operators of the state $|\psi\ra$ maximally violating our modified KCBS expression, that is, $M_k|\psi\ra = |\psi\ra$.
Now, to design the explicit form of $M_k$ we can use the 
optimal quantum realization \eqref{ops}-\eqref{opm} of the $n$-cycle KCBS inequality \eqref{kcbsn},
which gives us (see Appendix \ref{app:stab} for details of the derivation)
\be \label{Mik}
M_{i,k} = \bar{\alpha} \left[ \left(1-2\beta_{k}\right) A_i + \beta_{k}(A_{i+k} + A_{i-k})  \right],
\ee
where $i=1,\dots,n$ and $k=1,\dots,(n-1)/2$, whereas
the coefficients $\beta_{k}$ and $\bar{\alpha}$ are given by 
\be \label{beta} 
\beta_{k} = \frac{1}{2(1-\cos{\phi_{k}})} 
\ee
and
\begin{equation}
\bar{\alpha} = \frac{1+2\alpha}{1-2\alpha},
\end{equation}
where
%
$\alpha$, $\phi_{k}$ are defined in Eqs. \eqref{alpha} and \eqref{phi}, respectively.
%
Let us remark that $M_{i,k},\bar{\alpha}, \beta_i$ are all functions of $n$ which for the sake of simplification is not specified explicitly. Moreover, the operators $M_{i,k}$ defined in \eqref{Mik} act on unknown Hilbert space $\mathcal{H}$ of finite dimension. 

We now go back to the SOS decomposition \eqref{sosg} which is deemed to be of the form 
\begin{equation}
    \sum_{i,k} c_k \left[\I - M_{i,k} \right]^2
\end{equation}
with some non-negative parameters $c_k$ to be determined. 
By plugging the expression of $M_{i,k}$ \eqref{Mik} into it and after some rearrangement of indices, we obtain
\begin{widetext}
\beq \label{sos1}
\sum_{i,k} c_k \left[\I - M_{i,k}\right]^2 & = & \left( n \bar{\alpha}^2   \sum_k c_k  \left( \frac{1}{\bar{\alpha}^2} + 1 + 6\beta_{k}^2 - 4\beta_{k} \right) \right) \I -\left( 2\bar{\alpha} \mathlarger{\sum}\limits_k c_k \right) \sum_i A_i \nonumber \\
&&+\bar{\alpha}^2 \sum_i \left[ 2 c_1 \beta_{1} \left(1-2\beta_{1} \right) +c_{\frac{n-1}{2}}  \beta^2_{\frac{n-1}{2}} \right] \{A_i,A_{i+1}\}  \nonumber \\
&& +  \bar{\alpha}^2  \sum_i  \sum^{(n-3)/2}_{k=2} \left[ 2 c_{k} \beta_{k} \left(1-2\beta_{k}\right) + c_{f\left(\frac{k}{2}\right)}   \beta^2_{f\left(\frac{k}{2}\right)} \right] \{A_i,A_{i+k}\},
\eeq 
\end{widetext}
where
\beq
f\left(\frac{k}{2}\right) = 
\begin{cases}
k/2, & \text{ if $k$ is even} \\
(n-k)/2, & \text{ if $k$ is odd}.
\end{cases}
\eeq
We want to choose the coefficient $c_k$ so that they are non-negative and all the anti-commutators $\{A_i,A_{i+k}\}$ vanish except for $k=\pm1$.
For that purpose we consider $n=2^m+1$ for $m\in \mathbbm{N}\setminus \{1\}$. 
First we take $c_k=0$ whenever $k \neq 2^x$, where $x=0,\dots,m-1$. 
It follows from \eqref{sos1} that our requirement is fulfilled if the following set of equations is satisfied
\be \label{ckeqs}
2 c_{2^{x}} \beta_{2^{x}} \left(1-2\beta_{2^{x}}\right) +c_{2^{x-1}}  \beta_{2^{x-1}}^2 =0
\ee
for $x=1,\dots,m-1$.
The above equation \eqref{ckeqs} implies for all $x=1,\dots,m-1$
\beq \label{ck/c1}
\frac{c_{2^x}}{c_1} &=& \frac{1}{2^x }\prod^x_{j=1} \frac{ \beta_{2^{j-1}}^2 }{ \beta_{2^j} \left(2\beta_{2^j}-1 \right)} \nonumber \\
&=& \left(\frac{\beta_{1}}{2^{x} \beta_{2^{x}}}\right)^2 \prod^x_{j=1} \sec(\phi_{2^j}) .
\eeq 
Since $\sec(\phi_{2^j})$ is positive for all $j$ \footnote{Note that $\cos{\phi_{2^j}}=\cos{(\pi 2^j/n)}$ and $0 < \pi 2^j/n < \pi/2, \forall j=1,2,\dots,m-1$.}, $c_{2^x}/c_1$ is also positive. Now, to provide a plausible solution of $c_{2^x}$, it suffices to choose a positive $c_1$. Due to \eqref{ckeqs} the remaining anti-commutators in \eqref{sos1} are $\{ A_i, A_{i+1}\}$ with a factor
\be \label{sos3}
\bar{\alpha}^2 \left[ 2 c_1 \beta_{1} \left(1-2\beta_{1} \right) + c_{2^{m-1}}  \beta_{2^{m-1}}^2 \right].
\ee
For simplicity we choose this factor to be 1/2 which implies that $c_1$ is such that 
\be \label{c1get}
4 c_1 \beta_{1} \left(1-2\beta_{1} \right) + 2 c_{2^{m-1}}  \beta_{2^{m-1}}^2 =  \frac{1}{\bar{\alpha}^2 }.
\ee 
After substituting $c_{2^{m-1}}$ from Eq. \eqref{ck/c1}, the above gives
\begin{equation} \label{c1}
c_1 = \frac{2^{2m-3}}{\bar{\alpha}^2 }  
 \frac{1}{2^{2m-1}\beta_{1} \left(1-2\beta_{1} \right) + \beta_{1}^2 \ \prod\limits^{m-1}_{j=1} \sec(\phi_{2^j})}.
\end{equation}
One can readily verify that $c_1$ is positive.
Finally, due to \eqref{ckeqs} and \eqref{c1get}, Eq. \eqref{sos1} reads as,
\be \label{sosn}
\sum_{i,k} c_k \left[\I - M_{i,k} \right]^2  = \eta_n \I - B_n,
\ee
where
\be
B_n = - \frac12 \sum_i \{ A_i, A_{i+1} \} - \gamma\sum_i A_i \ , \ee
\be
\gamma = -2 \bar{\alpha} \sum_k c_k \ , 
\ee
and
\beq \label{etan}
\eta_n =  n\bar{\alpha}^2  
\sum_k c_k \left( \frac{1}{\bar{\alpha}^2} + 1+6\beta_{k}^2 - 4\beta_{k} \right),
\eeq
and $c_k, M_{i,k}$ are defined in \eqref{ck/c1}, \eqref{c1} and \eqref{Mik}.

From Eq. \eqref{beta} we know that $\bar{\alpha}$ is a negative quantity and hence $\gamma$ is positive. Thus, our modified $n$-cycle KCBS inequality is 
\begin{equation} \label{Bn}
\B_n := - \frac12 \sum_i ( \la \A_i\A_{i+1}\ra + \la \A_{i+1} \A_i \ra ) - \gamma \sum_i \la \A_i \ra 
\leqslant  \eta_n^C
\end{equation} 
whose quantum bound is $\eta_n$ \eqref{etan} and the classical value $\eta^C_n$ is provided in \textit{Result 3}. It follows from the construction of the SOS \eqref{sosn} that the qutrit quantum state and measurements defined in Eqs. \eqref{ops}-\eqref{phi} satisfy the stabilizing relations $M_{i,k}|\psi\ra=|\psi\ra$, implying the bound $\eta_n$ is tight, or, in other words, the maximal quantum value of
(\ref{Bn}) equals $\eta_n$.

To put the above mathematical analysis in a nutshell, the expression of the noncontextuality inequality \eqref{Bn} is derived such that it meets a SOS decomposition \eqref{sosg} of certain form. This leads us to the following result.

\begin{result}[Modified $n$-cycle expression with SOS]
The maximum quantum value of modified $n$-cycle noncontextuality expression \eqref{Bn} with a SOS decomposition \eqref{sosn} is $\eta_n$ \eqref{etan} (where $n=2^m+1,m\in \mathbbm{N}\setminus \{1\})$.
\end{result}
Let us finally prove the classical bound of our new noncontextuality expression.

\begin{result}[Maximal classical value]
The classical value of $\B_n$ in Eq. \eqref{Bn} is given by $n+\gamma -2$.
\end{result}
\begin{proof}
The classical value can be obtained by assigning $\pm 1$ values to the observables appearing in \eqref{Bn}, that is,
\be \label{Bc}
\eta_n^C = \max_{a_i\in \{1,-1\}} \left\{ - \sum^n_{i=1} a_ia_{i+1} - \gamma \sum^n_{i=1} a_i \right\}, 
\ee
where $\gamma$ is positive.
Let us say in the optimal assignment there are $k$ number of $a_i$ which are $-1$. We first assume $k> n/2$. When there are $k$ number of $-1$, and $n-k$ number of $+1$, the minimum value of $\sum_i a_ia_{i+1} = 4k - 3n$, and the quantity $\sum_i a_i = n-2k$. Substituting these values in \eqref{Bc} we see
\be \label{ncb3}
\eta_n^C = \left(3-\gamma\right) n - \left(4-2\gamma\right) k.\ee
Therefore, the optimal value of $\eta_n^{C}$ is obtained for the minimum value of $k$, that is, for $k=(n+1)/2$. This implies the right-hand-side of \eqref{ncb3} is $n+\gamma -2$. Similarly, if $k<n/2$, then we have $(n-k)>n/2$, and following a similar argument we can obtain the same bound.
\end{proof}

\section{Self-testing of quantum devices}

An exact self-testing statement provides us the certification of quantum devices, given that we observe an optimal violation of a noncontextuality inequality. However, the observed statistics are unchanged in the presence of auxiliary degrees of freedom (or auxiliary systems) and a global unitary. Therefore, self-testing in the context of state-dependent quantum contextual correlation \cite{bharti,knill} infers unique state and measurements up to these equivalences. 

Here, we take the definition of self-testing stated in \cite{knill}. Formally, self-testing of preparation $|\overline{\psi}\ra\in\mathbbm{C}^d$ and a set of measurements $\{\overline{A}_i\}^n_{i=1}$ acting on $\mathbbm{C}^d$ is defined as follows: if a set of observables $\{A_i\}^n_{i=1}$ acting on unknown finite-dimensional Hilbert space $\mathcal{H}$ and a state $|\psi\ra\in\mathcal{H}$ maximally violate a noncontextuality inequality, then there exists 
a projection $\mathbbm{P}:\mathcal{H}\to \mathbbm{C}^d$ and a unitary operation 
$U$ on $\mathbbm{C}^d$ such that
\begin{enumerate}
    \item $U(\mathbbm{P}|\psi\ra) = |\overline{\psi}\ra $ ,
    \item $U(\mathbbm{P}A_i\mathbbm{P})U^{\dagger}=\overline{A}_i$ for all $i=1,\ldots,n$. \blk 
\end{enumerate}
%
%
%
%
%
To obtain self-testing only from the reduced Assumption mentioned in section \ref{sec:2}, we consider a modified version of the expression $\B_n$ $\eqref{Bn}$ of the following form
\be \label{Bn2}
\tilde{\B}_n := \B_n - \sum_i \left[p(++|\A_{i+1},\A_{i}) + p(++|\A_{i-1},\A_{i}) \right].
\ee 
Since the additional term is non-positive, the classical and quantum bounds of $\tilde{\B}_n$ are the same as for $\B_n$. Moreover, it follows from \eqref{sosn} that the SOS decomposition of $\tilde{B}_n$ is
\beq \label{sosn2}
 \eta_n \I - \tilde{B}_n &=& \sum_{i,k} c_k \left[\I - M_{i,k} \right]^2 + \sum_i (P_iP_{i+1})^\dagger(P_iP_{i+1}) \nonumber \\
&& + \sum_i (P_iP_{i-1})^\dagger(P_iP_{i-1}),
\eeq 
where 
\begin{eqnarray}
    \tilde{B}_n = B_n - \sum_i P_{i+1}P_iP_{i+1}  - \sum_i P_{i-1}P_iP_{i-1} ,
\end{eqnarray}
and $\eta_n$ is again the optimal quantum value of $\tilde{B}_n$.
Let us now show that our inequality \eqref{Bn2} can be used to make a self-testing statement, according to the above definition, for the state and observables \eqref{ops}-\eqref{opm} maximally violating it.
\begin{result}[Self-testing]
Under the Assumption stated in Sec. \ref{sec:2}, if a quantum state $\ket{\psi}\in \mathcal{H}$ and a set of $n$ (where $n=2^m+1,m\in \mathbbm{N}\setminus \{1\})$ measurements $A_i$ acting on $\mathcal{H}$ violate the inequality \eqref{Bn2} maximally, then there exists a projection $\mathbbm{P}:\mathcal{H}\to\mathbbm{C}^3$ and a unitary $U$ acting on $\mathbbm{C}^3$ such that  
\beq \label{sn}
& U (\mathbbm{P} A_i \mathbbm{P}^\dagger) U^\dagger = 2|\widehat{v}_i \ra\!\la  \widehat{v}_i | - \I_3, \nonumber \\
& \quad  U(\mathbbm{P}|\psi\ra) = (1,0,0)^T,
\eeq
where $|\widehat{v}_i\ra$ are defined in \eqref{pn}.
%
%
\end{result}
\begin{proof}
Taking the expectation value of the state $\ket{\psi}$ on both side of the SOS decomposition \eqref{sosn2} of $\mathcal{B}$, we obtain by virtue of
\eqref{ar} that for any $i$ and $k$,
\begin{equation}
M_{i,k}\ket{\psi} = \ket{\psi}.    
\end{equation}
In the particular $k=1$ case this condition when combined with 
the explicit form of $M_{i,1}$ given in Eq. \eqref{Mik} together with the fact that $\beta_1 = \alpha/(1+2\alpha)$, leads to the following relations for all $i=1,\dots,n$,
\be \label{c}
(A_i + \alpha A_{i+1} + \alpha A_{i-1} ) \ket{\psi} = (1-2\alpha) \ket{\psi}.
\ee 
Similarly, from the last two terms of the SOS decomposition \eqref{sosn2} we get that for all $i=1,\ldots,n$,
\beq \label{c0}
P_iP_{i\pm1}|\psi\ra  = 0.
\eeq 
Given the relations \eqref{c} and \eqref{c0}, the next Theorem provides the proof for the self-testing statement. 
\end{proof}
The self-testing property implies our modified inequality \eqref{Bn2} are non-trivial since any classical value assignment is not equivalent to the realization given in \eqref{sn}.
\begin{thm}\label{thm}
If a set of quantum observables $\{A_i\}^n_{i=1}$ (where $n$ is odd) of the form \eqref{Ai} acting on arbitrary finite-dimensional Hilbert space $\mathcal{H}$ and a unit vector $\ket{\psi}\in \mathcal{H}$ satisfy the relations \eqref{c} and \eqref{c0}, then there exists a projection operator $\mathbbm{P}:\mathcal{H}\to \mathbbm{C}^3$ and a unitary $U$ acting on $\mathbbm{C}^3$ such that 
\eqref{sn} holds true.
\end{thm}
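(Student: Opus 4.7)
The plan is to translate the three algebraic relations \eqref{c}, \eqref{c'}, \eqref{c0} into linear-algebraic statements about the vectors $|\psi_i\rangle := F_i|\psi\rangle$, where $F_i = (A_i + \mathbbm{1})/2$, and then identify a three-dimensional $A_i$-invariant subspace carrying the reference KCBS realization. Unpacking the hypotheses: relation \eqref{c'} gives $F_i^2|\psi\rangle = F_i|\psi\rangle$, so $F_i|\psi_i\rangle = |\psi_i\rangle$, $A_i|\psi_i\rangle = |\psi_i\rangle$, and $p_i := \|\psi_i\|^2 = \langle\psi|F_i|\psi\rangle$; relation \eqref{c0} yields $\langle\psi_i|\psi_{i\pm 1}\rangle = 0$ and $A_i|\psi_{i\pm 1}\rangle = -|\psi_{i\pm 1}\rangle$; and relation \eqref{c}, after substituting $A_i = 2F_i - \mathbbm{1}$, becomes the cyclic vector identity $|\psi_i\rangle + \alpha|\psi_{i+1}\rangle + \alpha|\psi_{i-1}\rangle = |\psi\rangle$. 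In particular $|\psi\rangle \in V := \mathrm{span}\{|\psi_i\rangle\}_i$.

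Next I would show that $p_i \equiv p = \cos^2\theta$. From $\langle\psi_i|\psi_{i-1}\rangle = 0$ and the expansion $|\psi_{i-1}\rangle = |\psi\rangle - \alpha|\psi_i\rangle - \alpha|\psi_{i-2}\rangle$ one gets $\langle\psi_i|\psi_{i-2}\rangle = (1-\alpha)p_i/\alpha$. Running the same argument with the roles of $i$ and $i-2$ swapped (and complex-conjugating) yields $\langle\psi_i|\psi_{i-2}\rangle = (1-\alpha)p_{i-2}/\alpha$. Since $\alpha \neq 1$ for $n \geq 5$, this forces $p_i = p_{i-2}$; as $n$ is odd, iterating across the cycle gives $p_i \equiv p$, and the scalar version of the cyclic identity, $p_i + 2\alpha p_i = 1$, then gives $p = 1/(1+2\alpha) = \cos^2\theta$. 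A recursive extension of the same computation determines every cross inner product $\langle w_i|w_{i+k}\rangle = c_k$, where $|w_i\rangle := |\psi_i\rangle/\cos\theta$ and $c_k$ is a universal constant depending only on $\alpha$.

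I would then pin down $V$ and the induced measurements. Discrete Fourier transforming the cyclic relation shows that only modes $k \in \{0, (n\pm 1)/2\}$ can be nonzero, the latter two kernel indices arising precisely from the choice $\alpha = (1/2)\sec(\pi/n)$, so $\dim V \leq 3$; using the orthogonality $\langle\psi_i|\psi_{i\pm 1}\rangle = 0$ together with the incompatibility of all $|\psi_i\rangle$ being parallel to $|\psi\rangle$ (which would contradict orthogonality) shows that the kernel modes are nonzero, hence $\dim V = 3$. The subspace $V$ is $A_i$-invariant by the first-paragraph identities together with the fact that any $|\psi_j\rangle$ can be written as a linear combination of $|\psi_i\rangle, |\psi_{i\pm 1}\rangle$ (three elements of a three-dimensional space). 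The orthogonality $|\psi_i\rangle \perp |\psi_{i\pm 1}\rangle$ then forces $F_i|_V$ to be the rank-one orthogonal projection onto $|w_i\rangle$, and consequently $A_i|_V = 2|w_i\rangle\langle w_i| - \mathbbm{1}_V$.

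Finally I would build the isometry. A direct trigonometric calculation using $\phi_i = (n-1)\pi i/n$ and $\cos\theta = 1/\sqrt{1+2\alpha}$ shows that the universal constants $c_k$ coincide with the reference Gram entries $\langle\hat v_i|\hat v_{i+k}\rangle$, while $\langle\psi|w_i\rangle = \sqrt{p} = \cos\theta = \langle 0|\hat v_i\rangle$. Hence the configurations $(V, \{|w_i\rangle\}, |\psi\rangle)$ and $(\mathbbm{C}^3, \{|\hat v_i\rangle\}, (1,0,0)^T)$ share identical Gram data, so after defining $P:\mathcal H \to \mathbbm{C}^3$ as the partial isometry identifying $V$ with $\mathbbm{C}^3$ via an orthonormal basis (and zero on $V^\perp$), there exists a unitary $U$ on $\mathbbm{C}^3$ carrying $P|w_i\rangle$ to $|\hat v_i\rangle$ and $P|\psi\rangle$ to $(1,0,0)^T$; combining this with $F_i|_V = |w_i\rangle\langle w_i|$ yields $U(PA_iP^\dagger)U^\dagger = 2|\hat v_i\rangle\langle\hat v_i| - \mathbbm{1}_3$. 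The main obstacle I anticipate is the combined step of establishing $\dim V = 3$ exactly and matching the Gram matrices: the Fourier argument gives only an upper bound, so ruling out degenerate configurations requires using the specific value of $\alpha$, and handling the phase ambiguities of the complex vectors $|w_i\rangle$ relative to the real $|\hat v_i\rangle$ requires propagating two initial phase choices through the cyclic relation.
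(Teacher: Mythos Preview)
Your proposal is correct and takes a genuinely different route from the paper. The paper first fixes $V=\mathrm{span}\{|\psi\rangle,A_1|\psi\rangle,A_3|\psi\rangle\}$ and proves invariance under the algebra of the $A_i$ by a recursion (Lemma~\ref{le:v}): relation \eqref{c} yields $A_i|\psi\rangle\in V$, then \eqref{c0} and \eqref{c'} give $A_iA_{i\pm1}|\psi\rangle\in V$, and one climbs to arbitrary $A_iA_j|\psi\rangle$; it then rules out rank two and three for each $\tilde F_i$ by casework (Lemma~\ref{le:pvm}), and finally pins down the $|v_i\rangle$ by explicit gauge-fixing (sending $|\tilde\psi\rangle$ to $|0\rangle$, using the residual $U(2)$ freedom to normalize $|v_n\rangle$ and the phase of $|v_1\rangle$) and solving the component equations \eqref{w1}--\eqref{varphii} index by index. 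You instead extract $\dim V\le 3$ directly from the discrete Fourier transform of the cyclic identity, get $\dim V\ge 3$ from the odd-cycle orthogonality with $p_i>0$, and then read off both invariance and the rank-one form of $A_i|_V$ in one stroke from the eigenvalue relations $A_i|\psi_i\rangle=|\psi_i\rangle$, $A_i|\psi_{i\pm1}\rangle=-|\psi_{i\pm1}\rangle$ once you know $\{|\psi_i\rangle,|\psi_{i+1}\rangle,|\psi_{i-1}\rangle\}$ spans $V$; the final identification is by Gram-matrix matching rather than explicit coordinates. Two remarks on the obstacles you anticipate: there is in fact no phase freedom to manage, since $|w_i\rangle=F_i|\psi\rangle/\cos\theta$ is a specific vector (not merely a ray) and the recursion $c_k+\alpha c_{k+1}+\alpha c_{k-1}=p$ with real data $c_0=p$, $c_{\pm1}=0$ forces every $\langle w_i|w_{i+k}\rangle$ to be real automatically; and the lower bound $\dim V\ge 3$ does not need the specific value of $\alpha$ at all (only the Fourier upper bound does, through $1+2\alpha\cos(2\pi k/n)=0$). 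The paper's argument is more constructive, producing the unitary $U$ explicitly; yours is shorter and more structural, trading the coordinate computations for the standard fact that equal Gram data implies unitary equivalence.
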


\begin{proof}
We prove this theorem in two steps. 

\textit{Step 1.} In the first step, we deduce the effective dimensionality of the observables $A_i$ and the state $\ket{\psi}$. Let us define a vector space 
$V = \text{Span} \{\ket{\psi}, A_1\ket{\psi},A_3\ket{\psi}\}.$ 
Due to {\it Lemma} \ref{le:v} (stated in Appendix \ref{app:lem}), it suffices to consider the observables $A_i$ and the state $|\psi\ra$ restricted to $V$.
In other words, {\it Lemma} \ref{le:v} points out that the Hilbert space $\mathcal{H}$ can be decomposed as $V\oplus V^{\bot}$ and all the operators $A_i$ have the following block structure
\be
 A_i =  \left(\begin{array}{@{}c|c@{}}
   \tilde{A}_i  & \mathbb{O} \\
\hline
  \mathbb{O} & A_i'
\end{array}\right),
\ee 
wherein $\tilde{A}_i, A_i'$ are acting on $V,V^{\bot}$, respectively; in particular, $A_i'\ket{\psi}=0$ for any $i$. This allows us to define
\beq
& \tilde{A}_i = \mathbbm{P}A_i \mathbbm{P}^\dagger = 2\tilde{P}_i-\I, \nonumber  \\ 
& \ket{\tilde{\psi}} = \mathbbm{P}\ket{\psi} ,
\eeq
where $\mathbbm{P}$ is the projection operator from $\mathcal{H}$ to $V$, $\tilde{P}_i=\mathbbm{P}P_i \mathbbm{P}^\dagger\geqslant 0$ and $\I$ is the identity operator acting on $V$. 

%
%

It follows from Eq. \eqref{Ai} and Eqs. \eqref{c} and \eqref{c0} that the projected measurements $\tilde{P}_i$ and the state $\ket{\tilde{\psi}}$ satisfy the following sets of relations for all $i=1,\dots,n$,
\beq  
& \quad \tilde{P}_i\tilde{P}_{i\pm 1}\s = 0, \label{cond1} \\
& \left(\tilde{P}_i+\alpha\tilde{P}_{i-1}+\alpha\tilde{P}_{i+1}\right)\s= \s , \label{cond2} 
\eeq 


\textit{Step 2.}
In the second step, we characterize the observables $\tilde{A}_i$. 
With the help of \textit{Lemma} \ref{le:pvm} given in Appendix \ref{app:lem}, we first show that all observables 
$\tilde{A}_i$ are of the form
\begin{equation}\label{Aitilde}
\tilde{A}_i=2|v_i\rangle\!\langle v_i| - \I
\end{equation}
for some normalized vectors $\ket{v_i}\in\mathbbm{C}^3$ such that $\langle v_{i}| v_{i\pm1}\rangle=0$.
The remaining part is the characterization of $\ket{ v_i}$.
By plugging Eq. (\ref{Aitilde}) into Eq. \eqref{cond2}
we obtain that for all $i$,
\begin{equation}\label{relation2}
 (   | v_i\rangle\!\langle v_i|+\alpha | v_{i-1}\rangle\!\langle v_{i-1}|+\alpha | v_{i+1}\rangle\!\langle v_{i+1}|)\s=\s.
\end{equation}
We use the fact that $| v_{i}\ra, |v_{i\pm1}\rangle$ are orthogonal and multiply $\bra{ v_{i-1}}$
and $\bra{ v_{i+1}}$ with Eq. (\ref{relation2}), which lead us to the following equations
\begin{equation} \label{cond7}
   \alpha\langle v_{i-1}| v_{i+1}\rangle\langle v_{i+1}\s=(1-\alpha)\langle v_{i-1}\s
\end{equation}
and
\begin{equation}\label{condition3}
  \alpha\langle v_{i+1}| v_{i-1}\rangle\langle v_{i-1}\s=(1-\alpha)\langle v_{i+1}\s
\end{equation}
for all $i$. By substituting the term $\langle v_{i-1}\s$ from the first equation to the second one, we arrive at the following conditions
\begin{equation}\label{ScProd}
\forall i, \quad    |\langle v_{i-1}| v_{i+1}\rangle|=\frac{1-\alpha}{\alpha}.
\end{equation}
Note that, here we use the fact that $\langle v_{i+1}\s \neq 0$ \footnote{If $\protect{\langle v_{j+1}\s} = 0$ for some $j$, then \eqref{cond7} implies $\protect{\langle v_{j-1}\s}$ is also 0, and further \eqref{relation2} implies $\protect{|v_{j}\rangle\!\langle v_{j}\s = \s}$. Substituting these in \eqref{relation2} taking $i=j+1$, 
we arrive at a relation $\protect{|v_{j+2}\rangle\!\langle v_{j+2} \s= (1-\alpha)/\alpha \s}$ which cannot be true for any finite $n$ since $\protect{|v_{j+2}\rangle\!\langle v_{j+2}|}$ has eigenvalues 1,0.}.
Considering the absolute value of both side of \eqref{condition3}
and using \eqref{ScProd} we obtain another set of conditions
\begin{equation} \label{ScProd2}
\forall i, \quad    |\langle \tilde{\psi}| v_{i-1}\rangle|=|\langle \tilde{\psi}| v_{i+1}\rangle| .
\end{equation}
%
And since $n$ is odd, as a consequence of the above equation, 
\begin{equation} \label{ScProd2a}
\forall i,j, \quad    |\langle \tilde{\psi}| v_{i}\rangle|=|\langle \tilde{\psi}| v_{j}\rangle| .
\end{equation}
Let us try to see what is the most general form of $\ket{ v_i}$
compatible with the above conditions. 
First let us exploit the fact that observed probabilities do not change if we rotate the state and measurements by a unitary operation. 
We thus choose it so that
$U\s=(1,0,0)^T\equiv\ket{0}$.
We also notice that any unitary of the following form 
\begin{equation}\label{unitary}
    \left(
    \begin{array}{cc}
        1 & 0\\
        0 & U'
    \end{array}
    \right)
\end{equation}
with $U'$ being any $2\times 2$ unitary does not change
$\ket{0}$. Later we will use this freedom.

Due to the fact that we are characterizing projectors $\ket{ v_i}\!\langle  v_i|$ rather than the vectors themselves, we can always assume the first element of the vector is positive, that is, $\ket{v_i}$ has the form,
\be \label{vigen}
\ket{ v_i} = \left( \cos\theta_i, e^{\mathbbm{i}a_i}\sin\theta_i\sin\phi_i, e^{\mathbbm{i}b_i}\sin\theta_i\cos\phi_i \right)^T.
\ee 
The condition \eqref{ScProd2a} implies that all $\cos\theta_i$ are equal and therefore let us denote $\theta_i = \theta$. Plugging these forms of $\ket{v_{i}}$ and $\s=\ket{0}$ into Eq. \eqref{relation2}, the first element of the vector equation leads to
\be \label{costheta}
\cos\theta=\frac{1}{\sqrt{1+2\alpha}}.
\ee

Using this freedom we can bring one of the vectors, say $\ket{ v_n}$, 
to $(\cos\theta,0,\sin\theta)^T$ by taking
\begin{equation}\label{phin}
\sin\phi_n=0, \quad e^{\mathbbm{i}b_n}=1 .
\end{equation}
Then, due to the condition $\langle v_1| v_n\rangle=\langle v_{n-1}| v_n\rangle=0$ we infer $e^{\mathbbm{i}b_{1}} ,e^{\mathbbm{i}b_{n-1}}$ are real and without loss of generality we can take
\begin{equation} \label{phij}
 e^{\mathbbm{i}b_{1}} = e^{\mathbbm{i}b_{n-1}} = 1
\end{equation}
by absorbing the sign in $\cos\phi_1,\cos\phi_{n-1}$.
Further, we can get rid one of the phases in $\ket{ v_1}$, that is,
\be \label{a1}
e^{\mathbbm{i}a_1}=1,
\ee 
and take $\sin(\phi_1)$ to be non-negative by applying another unitary of the form \eqref{unitary},
\begin{equation}
    U'=\mathrm{diag}[\pm\exp(-\mathbbm{i}a_1),1] 
\end{equation}
that does not change the simplified form of $\ket{v_n}$.
Equating the second and third element of the vector equation \eqref{relation2}, we obtain the relations
\begin{equation} \label{w1}
e^{\mathbbm{i}a_i} \sin\phi_i + \alpha e^{\mathbbm{i}a_{i-1}} \sin\phi_{i-1}+ \alpha e^{\mathbbm{i}a_{i+1}} \sin\phi_{i+1} = 0,
\end{equation}
and
\be \label{w2}
e^{\mathbbm{i}b_i} \cos\phi_i + \alpha e^{\mathbbm{i}b_{i-1}} \cos\phi_{i-1} + \alpha e^{\mathbbm{i}b_{i+1}} \cos\phi_{i+1} = 0 .
\ee
With the aid of \eqref{phin} and \eqref{a1}, Eq. \eqref{w1} for $i=n$ points out $\sin(\phi_1) = -e^{\mathbbm{i}a_{n-1}} \sin(\phi_{n-1})$ which allows us to consider $e^{\mathbbm{i}a_{n-1}}=1$.
Taking $i=1$ in Eqs. \eqref{w1} and \eqref{w2} and replacing the values of  $\sin\phi_n,\cos\phi_n,e^{\mathbbm{i}a_1},e^{\mathbbm{i}b_1},e^{\mathbbm{i}b_n}$ we obtain,
\beq 
& \sin\phi_1+\alpha e^{\mathbbm{i}a_2} \sin\phi_2 = 0, \\
& \cos\phi_1+\alpha+\alpha e^{\mathbbm{i}b_2} \cos\phi_2 = 0 .
\eeq 
Thus, $e^{\mathbbm{i}a_2},e^{\mathbbm{i}b_2}$ are real and can be taken to be 1. Note, here we use the fact that $\sin{\phi_1} \neq 0$ \footnote{If $\sin{\phi_1} = 0$, then $\cos{\phi_1} = \pm 1$ and consequently $\protect{\langle v_n | v_{1} \rangle = \cos{(\theta \mp \theta)}}$ which contradicts the relation $\protect{\langle v_n | v_{1} \rangle = 0}$. Analogously, if we suppose $\cos{\phi_2}=0$, then $\cos{\phi_1}+\alpha = 0$ and $\sin\phi_2=\pm1$. Now, the first equation holds only if $2\alpha^2 = 1$.}. Similarly, by taking $i=2,\dots,n-2$ we conclude for all $i$
\be \label{aibi} 
e^{\mathbbm{i}a_i}=e^{\mathbbm{i}b_i}=1.
\ee
On the other hand, the condition $\langle  v_i| v_{i+1}\rangle =0$ implies,
\beq \label{varphii}
\phi_{i+1} - \phi_i &=& \cos^{-1}\left(-\frac{\cos^2\theta}{\sin^2\theta} \right) \nonumber \\
&=& \frac{(n-1)\pi}{n}.
\eeq 
Finally, considering $i=n$ in the above Eq. \eqref{varphii} and using $\sin\phi_n=0$ we deduce $\phi_1=(n-1)\pi/n$. We discard the possibility $\phi_1=-(n-1)\pi/n$ since $\sin\phi_1$ is taken to be non-negative. Thus, the equations \eqref{costheta}, \eqref{aibi}, and \eqref{varphii} together with $\phi_1$ establish that the unknown vectors $\ket{ v_i}$ in \eqref{vigen} are unitarily equivalent to $\ket{\widehat{v}_i}$. This completes the proof. 
\end{proof}

\section{Conclusion}

Kochen-Specker contextuality captures the intrinsic nature of quantum theory that essentially departs from classicality. It also offers a generalization of quantum correlations beyond nonlocality to a larger class of quantum systems and minimizes the demands to test non-classicality. Therefore, it is a fundamental problem to understand what is the maximal information about the underlying quantum system that can be inferred from the correlations observed in a contextuality experiment, and whether this information can be used for certification of quantum devices from minimal assumptions of their internal functioning. 

In this work, we derive self-testing statements for 
$n$-cycle scenario using weaker assumptions than those made in previous approaches based on Kochen-Specker contextuality \cite{csw,bharti,knill,bharti2019}. In particular, we do not assume orthogonality relations between measurement effects. Instead, we consider general two-outcome measurements which 
nevertheless obey a single assumption that the measurement device does not return any additional information except the post-measurement system  and does not possess any memory.
Moreover, we take a different approach, that is, we
use the sum-of-squares 'technique' that has successfully been 
used in the Bell scenario to derive maximal quantum violation of certain Bell inequalities as well as in making self-testing statements \cite{Bamps,chainbell,satwap,Kaniewski,sarkar2019selftesting,cui2019generalization,kaniewski2019weak,Augusiak_2019}, but has never been explored for self-testing in the contextuality scenario. 

We further remark that self-testing from quantum contextuality is not fully device-independent as far as its original definition is concerned, while, its experimental test does not require space-like separation. The assumption is critical to verify for practical purposes, however, in future studies, one may try to overcome it by restricting the computational power or the memory of the measurement device. Nonetheless, it is way more powerful than the usual process of tomography. It is also distinct from the self-testing approach in prepare-and-measure scenario \cite{STtavakoli,STfarkas} since no restriction on the dimensionality of the preparation is imposed here. 

Although the SOS decompositions hold for a certain number of measurements, a suitable adaptation of our approach in future studies may lead to SOS decompositions for an arbitrary odd number of measurements. 
Another direction for further study is to explore whether our approach can be applied to states and measurements of higher dimension than three and whether our self-testing statements can be made robust to experimental imperfections. 
From a more general perspective, it would be interesting to design a
unifying approach to self-testing based on Bell nonlocality and quantum contextuality.

\section*{Acknowledgement} 
This work is supported by the Foundation for Polish Science through the First Team project (First TEAM/2017-
4/31) co-financed by the European Union under the European Regional Development Fund.


\bibliographystyle{alphaarxiv}

\bibliography{bibliography}

\onecolumn
\appendix

\section{Obtaining the stabilizing operators}
\label{app:stab}

To guess the stabilizing operators $M_{i,k}$ we use the stabilizing operators in the optimal quantum realization of $n$-cycle KCBS inequality \eqref{kcbsn}. Let us assume that these operators are in the following form
\begin{equation}
    \widehat{M}_{i,k} = a \widehat{A}_i + b \widehat{A}_{i+k} + b' \widehat{A}_{i-k},
\end{equation}
where the coefficients $a$, $b$ and $b'$ are to be determined
as a solution to the equation
\be \label{guess}
(a \widehat{A}_i + b \widehat{A}_{i+k} + b' \widehat{A}_{i-k}) |\widehat{\psi} \rangle = |\widehat{\psi} \rangle ,
\ee
and $|\widehat{\psi}\rangle,\widehat{A}_i$ are given in  Eqs. \eqref{ops}-\eqref{opm}.
To solve the above we first notice the following relation,
\be
\widehat{A}_i |\widehat{\psi} \rangle = (\cos{2\theta},\sin{2\theta}\sin{\phi_i},\sin{2\theta}\cos{\phi_i})^{T},
\ee
which when substituted into Eq. \eqref{guess} 
leads one to a system of equations
\be
\label{linsysstab1}
\begin{bmatrix} 
  a (1+\frac{b}{a}+\frac{b'}{a}) \cos{2\theta}  \\[1ex]
  a \sin{2\theta} \big( \sin{\phi_i} +\frac{b}{a} \sin{\phi_{i+k}} +\frac{b'}{a} \sin{\phi_{i-k}} \big) \\[1ex]
  a \sin{2\theta} \big( \cos{\phi_i} +\frac{b}{a} \cos{\phi_{i+k}} +\frac{b'}{a} \cos{\phi_{i-k}} \big) \\
  \end{bmatrix} =
  \begin{bmatrix} 
  1  \\ 
  0  \\ 
  0  \\
  \end{bmatrix} . 
\ee
Assuming that $a \neq 0$ and taking into account that $\sin{2\theta} \neq 0$, the last two equations in the above system can be rewritten as 
\beq
\label{linsysstab2}
\begin{bmatrix} 
   \sin{\phi_i} &  \sin{\phi_{i+k}} & \sin{\phi_{i-k}}  \\[1ex]
   \cos{\phi_i} &  \cos{\phi_{i+k}} & \cos{\phi_{i-k}} \\
  \end{bmatrix}
  \begin{bmatrix}
    1 \\ b/a \\ b'/a
  \end{bmatrix} = 
  \begin{bmatrix}
     0 \\ 0
  \end{bmatrix}. 
\eeq
After multiplying the above equation from left by 
\be 
\begin{bmatrix}
  \sin{\phi_i} & \cos{\phi_i} \\
  \cos{\phi_i} & -\sin{\phi_i} \\
\end{bmatrix}
\ee 
and using the fact $\phi_{i+k}-\phi_i = \phi_k$, Eq. \eqref{linsysstab2} simplifies to,
\beq \label{lins3}
\begin{bmatrix} 
   1 &  \cos{\phi_k} & \cos{\phi_k}  \\ 
   0 &  \sin{\phi_k} & -\sin{\phi_k} \\
  \end{bmatrix}
  \begin{bmatrix}
    1 \\ b/a \\ b'/a
  \end{bmatrix} = 
  \begin{bmatrix}
     0 \\ 0
  \end{bmatrix} . 
\eeq
In this way we remark that the dependence of $i$ in \eqref{linsysstab2} disappears and the system of equations \eqref{lins3} imply
\be
\frac{b}{a} = \frac{b'}{a} = -\frac{1}{2} \sec{\phi_k}.
\ee
Substitution of above in the first vector equality of \eqref{linsysstab1} leads to 
\be
a = \frac{1}{(1-\sec{\phi_k})(2\cos^2{\theta}-1)},
\ee
and thus, we obtain a unique solution of $a,b,b'$.
Finally, substituting $a,b,b'$ into Eq. \eqref{guess} we can conveniently state $\widehat{M}_{i,k}$ operators in the following way
\begin{eqnarray}
\widehat{M}_{i,k} := \left(\frac{1+2\alpha}{1-2\alpha}\right) \left[ (1-2\beta_{k}) \widehat{A}_i + 
\beta_{k}(\widehat{A}_{i+k} + \widehat{A}_{i-k})\right],\nonumber\\
\end{eqnarray}
where 
\begin{equation}
\beta_{k} = \frac{1}{2(1-\cos{\phi_k})},
\qquad
\alpha = \frac12 \sec\left(\frac{\pi}{n}\right) .
\end{equation}

\section{Lemma \ref{le:v}-\ref{le:pvm}} \label{app:lem}

In this appendix, we provide two Lemmas that are used in the proof of the Theorem.

\begin{lemma}
\label{le:v}
If a set of quantum observables $\{A_i\}^n_{i=1}$ (where $n$ is odd) of the form \eqref{Ai} and a vector $\ket{\psi}$ satisfy the relations \eqref{c} and \eqref{c0}, then the vector space
\be \label{v}
V =   \mathrm{span} \{\ket{\psi}, A_1\ket{\psi},A_3\ket{\psi}\} \ee
is invariant under the algebra generated by $A_i$.
\end{lemma}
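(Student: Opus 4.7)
The goal is to show $A_j V\subseteq V$ for every $j$. Since $V=\mathrm{span}\{\ket{\psi}, A_1\ket{\psi}, A_3\ket{\psi}\}$, and compositions of operators that preserve $V$ also preserve $V$, this reduces to verifying three families of inclusions: $A_j\ket{\psi}\in V$, $A_jA_1\ket{\psi}\in V$, and $A_jA_3\ket{\psi}\in V$ for all $j$. My strategy is to interpret \eqref{c} as a three-term linear recurrence, first at the level of single-operator actions on $\ket{\psi}$ (Stage 1) and then, by left-multiplying \eqref{c} by an arbitrary $A_j$, at the level of two-operator products (Stage 2).

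\textbf{Stage 1: $A_j\ket{\psi}\in V$ for all $j$.} Writing $w_i:=A_i\ket{\psi}$, relation \eqref{c} becomes $w_i+\alpha w_{i+1}+\alpha w_{i-1}=(1-2\alpha)\ket{\psi}$, a three-term linear recurrence in $i$. Since $w_1,w_3\in V$ by construction, taking $i=2$ expresses $w_2$ as a linear combination of $\ket{\psi}, w_1, w_3$, so $w_2\in V$. Iterating forward, the recurrence at $i=3$ gives $w_4$ in terms of $\ket{\psi},w_2,w_3$; at $i=4$ it gives $w_5$ in terms of $\ket{\psi},w_3,w_4$; and continuing in this way we obtain $w_4,\ldots,w_n\in V$.

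\textbf{Stage 2: $A_jA_k\ket{\psi}\in V$ for all $j,k$.} Left-multiplying \eqref{c} at index $k$ by $A_j$ yields
\begin{equation*}
A_jA_k\ket{\psi} + \alpha A_jA_{k+1}\ket{\psi} + \alpha A_jA_{k-1}\ket{\psi} = (1-2\alpha)\,w_j,
\end{equation*}
a three-term recurrence in $k$ for $u_k^{(j)}:=A_jA_k\ket{\psi}$, whose inhomogeneity $w_j$ already lies in $V$ by Stage~1. For the base cases, \eqref{c'} gives $u_j^{(j)}=\ket{\psi}\in V$, while \eqref{c0}, after expanding $F_i=(A_i+\I)/2$, gives $u_{j\pm 1}^{(j)}=-(A_j+A_{j\pm 1}+\I)\ket{\psi}\in V$. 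Propagating the recurrence forward from the pair $u_j^{(j)}, u_{j+1}^{(j)}$ sweeps through $k=j+2,j+3,\ldots,j+n-1\equiv j-1\pmod n$, so $u_k^{(j)}\in V$ for every $k$. Specializing to $k=1$ and $k=3$ supplies the remaining two required inclusions, which establishes that $V$ is invariant under every $A_j$ and hence under the algebra they generate.

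\textbf{Main obstacle.} The only delicate step is Stage~2: the hypotheses \eqref{c}, \eqref{c'}, \eqref{c0} only constrain $A_i\ket{\psi}$ and the nearest-neighbor products $A_iA_{i\pm 1}\ket{\psi}$, and give no direct handle on $A_jA_k\ket{\psi}$ when $|j-k|\geq 2$. The decisive observation is that any identity of the form $Q\ket{\psi}=0$ can be left-multiplied by any operator $A_j$ to produce a new identity $A_jQ\ket{\psi}=0$; applying this idea to \eqref{c} lifts the first-order recurrence of Stage~1 to an analogous recurrence at the level of products, and the three base cases supplied by \eqref{c'} and \eqref{c0} are exactly enough to close it.
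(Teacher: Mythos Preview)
Your proof is correct. Stage~1 coincides with the paper's argument. For Stage~2 you take a genuinely different route: the paper proceeds by the \emph{gap} $|i-j|$, inserting $A_{i\pm1}^2$ via \eqref{c'} and using $[A_i,A_{i\pm1}]\ket{\psi}=0$ to factor $A_iA_{i\pm2}\ket{\psi}=(A_iA_{i\pm1})(A_{i\pm1}A_{i\pm2})\ket{\psi}$, then bootstrapping from $|i-j|=1,2$ to arbitrary gap. You instead left-multiply \eqref{c} by a fixed $A_j$ to obtain a three-term recurrence in $k$ for $A_jA_k\ket{\psi}$, seeded by the two consecutive base cases $A_j^2\ket{\psi}=\ket{\psi}$ and $A_jA_{j+1}\ket{\psi}=-(A_j+A_{j+1}+\I)\ket{\psi}$. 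This is cleaner: it reuses the same recurrence mechanism as Stage~1 and avoids the commutator manipulation on iterated states, at the mild cost of running the recurrence once for each $j$ rather than once per gap size. Either way the conclusion $A_jA_k\ket{\psi}\in V$ for all $j,k$ follows, and invariance of $V$ is immediate.
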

\begin{proof}
To prove this statement it suffices to show that $A_i\ket{\psi}$ for all $i=1,\ldots,n$ as well as all $A_iA_j\ket{\psi}$ with $i\neq j$ can be expressed as linear combinations of the basis vectors $\ket{\psi}$, $A_1\ket{\psi}$ and $A_3\ket{\psi}$. 

Let us begin by noting that 
Eq. \eqref{c} for $i=2$ gives us directly such a linear combination for $A_2\ket{\psi}$ and so $A_2\ket{\psi}\in V$. Then, the fact that $A_i\ket{\psi}\in V$ for $i=4,\ldots,n$ follows from Eq. (\ref{c}); it is enough to rewrite the latter as 
\begin{equation}\label{Raimat}
 A_i\ket{\psi}=\frac{1-2\alpha}{\alpha}\ket{\psi}-\frac{1}{\alpha}A_{i-1}\ket{\psi}-A_{i-2}\ket{\psi}.
\end{equation}

Let us now move on to showing that $A_iA_j\ket{\psi}\in V$
for all $i\neq j$. To this end, we first observe that using 
\eqref{c0} we obtain
\beq 
A_iA_{i\pm 1} \ket{\psi} &=& (2P_i-\I)(2P_{i\pm 1}-\I)\ket{\psi} \nonumber \\
&=& -(A_i + A_{i\pm 1} + \I ) \ket{\psi},
\eeq 
which due to the fact that $A_i\ket{\psi}\in V$, allows us to conclude that for all $i$, $A_iA_{i\pm 1}\ket{\psi}\in V$. 

Let us then consider the vectors $A_iA_j\ket{\psi}$ for pairs $i,j$
such that $|i-j|=2$. Using the property of involution and the fact $[A_i,A_{i\pm1}]\ket{\psi}=0$ which is a consequence of Eq. \eqref{c0}, we get
\beq
    A_i A_{i\pm 2}|\psi\ra &=& A_i A_{i\pm 2} (A_{i \pm 1})^2 |\psi\ra \nonumber \\
    &=& (A_i A_{i\pm 1})(A_{i\pm 1} A_{i\pm 2}) |\psi\ra.
\eeq
Since we have already shown $A_iA_{i\pm 1}\ket{\psi}\in V$, the above equation implies $A_iA_{i\pm 2}\ket{\psi}\in V$.

Given that $A_{i}A_j\ket{\psi}\in V$ for $|i-j|=1$ and $|i-j|=2$ we can 
then prove, applying the same argument as above, that $A_{i}A_j\ket{\psi}$ 
belong to $V$ for any pair $i,j$ such that $|i-j|=3$. In fact, following this approach recursively we can prove that 
$A_iA_j\ket{\psi}\in V$ for $i,j$ such that $|i-j|=k$ 
with $k=3,\ldots,n-1$, which completes the proof.
%
\end{proof}

Let us remark that the subspace $V$ is in fact spanned by any triple 
of the vectors $\ket{\psi}$, $A_i\ket{\psi}$ and $A_j \ket{\psi}$ with $i \neq j$.
This is a consequence of the fact that, as proven above, any vector $A_i\ket{\psi}$ is a linear combination of $\ket{\psi}$, $A_1\ket{\psi}$ and $A_3\ket{\psi}$.

\begin{lemma}
\label{le:pvm}
If a set of projectors $\{\tilde{P}_i\}^n_{i=1}$ acting on $\mathbbm{C}^3$ and a vector $\s$ satisfy the relations \eqref{cond1} and \eqref{cond2}, then each $\tilde{P}_i$ has rank one, that is, for each $i$ there exists a normalized vector $\ket{v_i}\in\mathbbm{C}^3$ such that
$\tilde{P}_i=|v_i\rangle\!\langle v_i|$ and, moreover, $\la v_i|v_{i\pm1}\ra =0$.
\end{lemma}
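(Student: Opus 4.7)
My plan is to exploit the three-dimensionality of the ambient space together with the three given relations to pin down the structure of each $\F_i$. Let $|\xi_i\ra := \F_i\s$. First, I would invoke the remark following Lemma~\ref{le:v}: since $\tilde{A}_i = 2\F_i - \I$, the statement that $\{\s, \tilde{A}_i\s, \tilde{A}_j\s\}$ spans $V \cong \mathbbm{C}^3$ for every $i\neq j$ is equivalent to $\{\s, |\xi_i\ra, |\xi_j\ra\}$ being a basis of $\mathbbm{C}^3$. In particular $|\xi_{i-1}\ra$ and $|\xi_{i+1}\ra$ are linearly independent, and no $|\xi_i\ra$ can vanish.

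Next, I would use condition \eqref{cond1}, which places both $|\xi_{i-1}\ra$ and $|\xi_{i+1}\ra$ in $\ker \F_i$, so $\dim \ker \F_i \geqslant 2$. Since $\F_i$ acts on $\mathbbm{C}^3$ and is positive semidefinite, this forces $\mathrm{rank}\,\F_i \leqslant 1$, and we may write $\F_i = \lambda_i |v_i\ra\!\la v_i|$ for some $\lambda_i \geqslant 0$ and a unit vector $|v_i\ra$. The non-vanishing of $|\xi_i\ra = \lambda_i \la v_i\s\, |v_i\ra$ then yields $\lambda_i > 0$ and $\la v_i\s \neq 0$; condition \eqref{cond3} reduces to $(\lambda_i^2 - \lambda_i)\la v_i\s\, |v_i\ra = 0$, which forces $\lambda_i = 1$. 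Hence each $\F_i = |v_i\ra\!\la v_i|$ is a rank-one projector.

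Orthogonality will then be an immediate substitution: \eqref{cond1} becomes $|v_i\ra\, \la v_i|v_{i\pm 1}\ra\, \la v_{i\pm 1}\s = 0$, and because $\la v_{i\pm 1}\s \neq 0$, we conclude $\la v_i|v_{i\pm 1}\ra = 0$, as desired.

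The conceptual crux is the rank bound, and the most delicate point I expect will be guaranteeing the non-vanishing of $|\xi_i\ra$: without it the decomposition of $\F_i$ as a projector onto a definite vector would be ambiguous, and the implication $\lambda_i^2 = \lambda_i \Rightarrow \lambda_i = 1$ would also break down. It is precisely for this nondegeneracy that the spanning statement in Lemma~\ref{le:v}'s remark is indispensable, and I would want to verify carefully that it really applies in the form I use here, namely that a spanning triple in a three-dimensional space is necessarily a basis and so can contain no zero vector.
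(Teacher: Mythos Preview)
Your approach is genuinely different from the paper's and, if the gap below were closed, more economical. The paper does not invoke Lemma~\ref{le:v} or its remark at all; instead it argues directly from the relations, first proving $\F_i\s\neq 0$ by contradiction, then eliminating $\mathrm{rank}\,\F_i=3$ (since invertibility plus \eqref{cond3} would force $\F_j\s=\s$ and hence $\F_{j\pm1}\s=0$), and finally eliminating $\mathrm{rank}\,\F_i=2$ through an eigendecomposition argument that ends in the impossible equation $2\alpha=1$. Your route via $\dim\ker\F_i\geqslant 2$ bypasses the rank-two case analysis entirely.

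There is, however, a real gap. The remark after Lemma~\ref{le:v} asserts only that any triple $\{\s,\tilde A_i\s,\tilde A_j\s\}$ spans the invariant subspace $V$, not that it spans the ambient $\mathbbm{C}^3$. Lemma~\ref{le:pvm}'s hypothesis says merely that the operators act on $\mathbbm{C}^3$; it does not say that the cyclic subspace generated by $\s$ exhausts $\mathbbm{C}^3$. So when you write ``$V\cong\mathbbm{C}^3$'' you are asserting $\dim V=3$, which is exactly what needs to be proved before your kernel argument can proceed. Your final paragraph identifies a nearby worry but misdiagnoses it: the trivial statement ``three vectors spanning a three-dimensional space form a basis'' is not in question; what is missing is a reason why the triple does not span something strictly smaller. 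Without this, you cannot conclude that $|\xi_{i-1}\ra$ and $|\xi_{i+1}\ra$ are linearly independent (or even that $|\xi_i\ra\neq 0$), and the rank bound collapses.

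The gap is fillable within your framework: one can show directly from \eqref{cond1}--\eqref{cond3} that $\dim V\in\{1,2\}$ leads to a contradiction (e.g.\ in dimension two the orthogonality $\langle v_i|v_{i+1}\rangle=0$ forces $\F_{i+2}=\F_i$, and then \eqref{cond2} yields $(2\alpha-1)\F_i\s=0$). But you should supply such an argument explicitly rather than absorbing it into the isomorphism $V\cong\mathbbm{C}^3$.
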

\begin{proof}
Since $\tilde{P}_i$ are projectors, we have 
\be \label{cond3}
\forall i, \ \tilde{P}_i^2 \s = \tilde{P}_i \s .
\ee 
Let us begin by showing that $\tilde{P}_{i}\s\neq 0$ for all $i$. Assume to this end 
that there exist $j$ such that $\tilde{P}_{j}\s= 0$. Using then 
Eq. \eqref{cond2} for $i=j-1$ we arrive at
\be \label{fs0}
(\tilde{P}_{j-1}+\alpha \tilde{P}_{j-2})\s = \s .
\ee 
After applying $\tilde{P}_{j-2}$ to both sides of this equation and using Eq. (\ref{cond1}), we obtain $\alpha \tilde{P}^2_{j-2}\s=\tilde{P}_{j-2}\s$
which is consistent with Eq. \eqref{cond3} if and only if $\tilde{P}_{j-2}\s=0$. Therefore, due to Eq. \eqref{fs0} we have $\tilde{P}_{j-1}\s=\s$.  Again, substituting these relations in \eqref{cond2} taking $i=j$, we arrive at $\tilde{P}_{j+ 1}\s=[(1-\alpha)/\alpha]\s$ which contradicts Eq. \eqref{cond3}.

Let us now show that all the operators $\tilde{P}_i$ are of rank one. 
We first prove that none of them can be of rank three. Assume for this purpose
that $\mathrm{rank}(\tilde{P}_j)=3$ for some $j$. Then, the condition \eqref{cond3} gives $\tilde{P}_j\s=\s$. This, after taking into account that $\tilde{P}_{j+1}\tilde{P}_j\s=0$ implies $\tilde{P}_{j+1}\s=0$, which contradicts the fact $\tilde{P}_i\s\neq 0$ for all $i$, as shown before. 

Let us then prove that none of $\tilde{P}_i$ can be of rank two. 
To this end, assume that there is $j$ such that $\mathrm{rank}(\tilde{P}_j)=2$
and consider the eigen-decomposition of $\tilde{P}_j$,
\be
\label{Fj}
 \tilde{P}_j =  | 1\ra\!\la  1| + | 2\ra\!\la  2|,
\ee 
where $| 1\ra,| 2\ra,| 3\ra$ are the eigenvectors, forming an orthonormal basis
in $\mathbbm{C}^3$. Subsequently, $\s$ can be expressed as
\be 
\s = x_1| 1\ra + x_2| 2\ra + x_3| 3\ra 
\ee 
for some $x_1,x_2,x_3\in \mathbbm{C}$.
Note that $x_1=x_2=0$ is not possible since it requires $\tilde{P}_j\s=0$. Similarly, $x_3\neq 0$, otherwise $\tilde{P}_j\s=\s$ which implies $\tilde{P}_{j\pm1}\s=0$. 

Now, employing the fact that $\tilde{P}_j$ is supported on $\mathrm{span}\{\ket{ 1},\ket{ 2}\}$, it follows from the condition $\tilde{P}_j\tilde{P}_{j\pm 1}\s=0$ that $\tilde{P}_{j\pm 1}\s= q_{3,\pm}\ket{ 3}$ for some $q_{3,\pm}\in \mathbbm{C}$.
By combining this with (\ref{cond3}) we find that 
\begin{equation}
    \tilde{P}_{j\pm 1}\ket{ 3}=\ket{ 3},
\end{equation}
that is, $\ket{ 3}$ is the eigenvector of $\tilde{P}_{j\pm1}$ with eigenvalue one, 
which, due to the fact that $\tilde{P}_{j\pm 1}\leqslant \mathbbm{1}$, implies that
$\tilde{P}_{j\pm 1}$ decompose as
\begin{equation}\label{Fj1}
    \tilde{P}_{j\pm1}=\tilde{P}_{j\pm1}'+\ket{ 3}\!\bra{ 3}
\end{equation}
with $\tilde{P}_{j\pm1}'$ being projectors supported on $\mathrm{span}\{\ket{ 1},\ket{ 2}\}$. 
By finally plugging Eqs. \eqref{Fj} - \eqref{Fj1} into Eq. 
\eqref{cond2} for $i=j$ and projecting the obtained equation onto $\ket{ 3}$
we see that $2\alpha=1$, which is not satisfied for any $n$. 

As a result all the operators $\tilde{P}_{i}$ are of rank one 
and therefore they can be expressed as 
\begin{equation}
  \tilde{P}_i= |v_i\ra\!\la v_i| 
\end{equation}
for some $\ket{v_i}\in\mathbbm{C}^3$. Furthermore, since $\tilde{P}_i\s \neq 0$, Eq. \eqref{cond1} implies $\la v_i|v_{i\pm1}\ra =0$. This completes the proof.
\end{proof}

\end{document}